\begin{document}
\title{Environment Modeling During Model Checking of Cyber-Physical Systems}
%
%\titlerunning{Abbreviated paper title}
% If the paper title is too long for the running head, you can set
% an abbreviated paper title here
%
\author{Guangyao Chen\inst{1}\orcidID{0000-0003-1367-8331} \and Zhihao Jiang\inst{2}\orcidID{0000-0002-6730-6915}}
\authorrunning{G. Chen \& Z. Jiang}
% First names are abbreviated in the running head.
% If there are more than two authors, 'et al.' is used.
%
\institute{ShanghaiTech University, Shanghai, China \\
\email{chengy2@shanghaitech.edu.cn}\and
ShanghaiTech University, Shanghai, China\\
\email{jiangzhh@shanghaitech.edu.cn}}

\maketitle              % typeset the header of the contribution
\begin{abstract}
Ensuring the safety and efficacy of Cyber-Physical Systems (CPSs) is challenging due to the large variability of their operating environment.
Model checking has been proposed for validation of CPSs, but the models of the environment are either too specific to capture the variability of the environment, or too abstract to provide counter-examples interpretable by experts in the application domain.
Domain-specific solutions to this problem require expertise in both formal methods and the application domain, which prevents effective application of model checking in CPSs validation.
A domain-independent framework based on timed-automata is proposed for abstraction and refinement of environment models during model checking.
The framework maintains an abstraction tree of environment models, which provides interpretable counter-examples while ensuring coverage of environment behaviors.
With the framework, experts in the application domain can effectively use model checking without expertise in formal methods.
\keywords{Abstraction Tree \and Timed Automata \and Formal Methods \and UPPAAL.}
\end{abstract}
\section{The Emergence of Cyber-Physical Systems (CPS)}
With the development of technologies, it is now possible to develop software that can make real-time decisions under complex situations.
As a result, problems in the physical world can be solved by software-controlled physical systems with little to none human intervention.
These \emph{Cyber-Physical Systems (CPSs)} are relieving human from tedious jobs and dangerous working environment, and have improved quality of lives and the overall efficiency of the society.

\begin{figure}
\centering
	\includegraphics[width=0.7\textwidth]{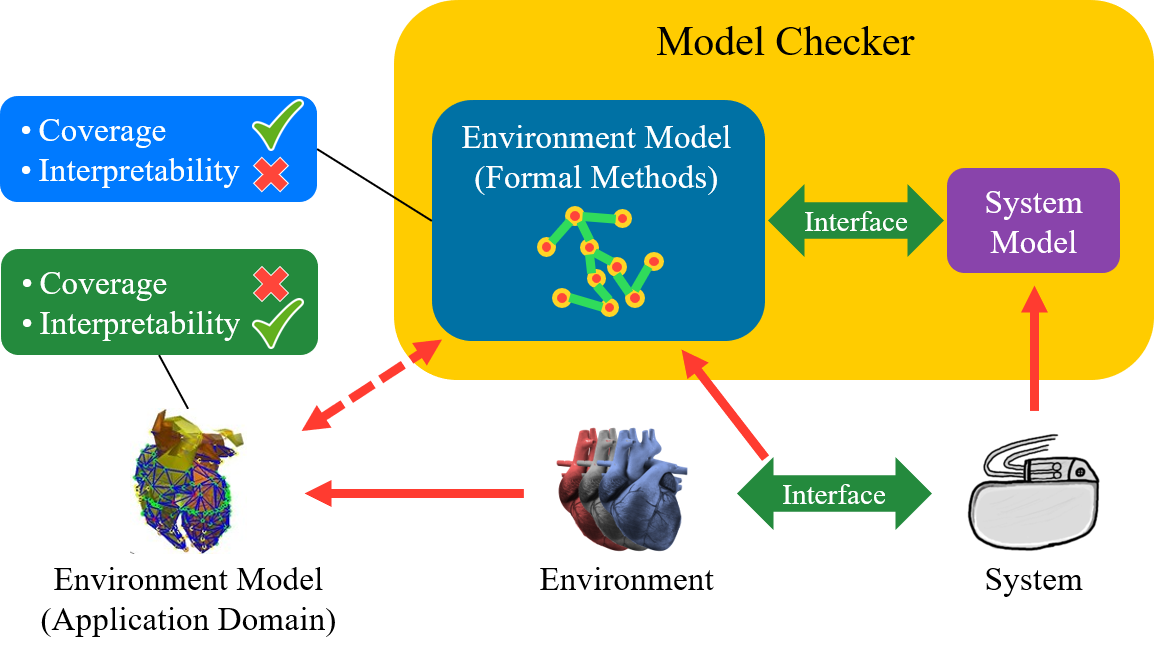}
	\caption{Environment models are developed from two different perspectives, but neither satisfy the need for validating CPSs. A domain-independent framework that can balance the coverage and interpretability of environment models is needed for effective use of model checking in CPS domains.}
	\label{fig:environment}
\end{figure}

With human-operated systems, decisions are made by domain experts, who have been trained to deal with the complexity and variability of the environment, and are responsible for preventing safety hazards.
For CPSs with increasing autonomy, malfunctions of the systems cannot receive timely human intervention, which can cause serious harm to people and properties in its operating environment, especially in safety-critical domains like medical devices \cite{Mederror}.
%Therefore these domains are highly regulated.
Manufacturers of CPSs are required to demonstrate the safety and efficacy of the systems, especially their software components.

\subsection{Validation of CPS Using Model Checking}
With increasing autonomy, CPS are required to make correct decisions under ALL possible environment conditions.
%The amount of environment conditions for CPSs is infinite due to complexity and uncertainty of the environment.
CPSs cannot be exhaustively tested as the amount of environment conditions is infinite.
Model checking exhaustively examines the reachable states of a model, which is suitable for validation of CPSs \cite{cps1,cps2}.
By modeling the CPS and its operating environment, model checking tools can prove that the CPS satisfies safety and efficacy requirements under conditions specified in the environment models, or provide counter-examples when requirements are violated. 
Development cost can be significantly reduced when bugs are found and safety guarantees are provided in the early development stage.

\subsection{Environment Modeling for CPS}
The environment models represent assumed environment conditions, and the results of the model checking can only support safety and efficacy of CPS under these conditions.
Model(s) of the environment should satisfy the following requirements:\\
\textbf{Coverage: }
Environment model(s) of CPSs should either 1) cover all environment conditions, or 2) cover all environment behaviors observable to the system.
%The first one is adapted in clinical trials.
These two conditions are equivalent but the second one can be better defined and quantified.\\
\textbf{Interpretability: }The safety and efficacy of CPS are evaluated on the states of the environment.
i.e. The patient's condition should be improved with a medical device compared to without the device.
%Due to the limited observability of CPS, inferring states of the environment from execution traces recorded by the CPS is infeasible in most of the cases.
In order to judge whether environment conditions have been improved, the models of the environment should have states and executions that are interpretable by experts in the application domain.

Unfortunately, these two requirements conflict with each other in most cases, and no single model can satisfy both.
Relaxing domain-specific constraints within the models may introduce new observable behaviors, which increases coverage at the cost of interpretability.
As shown in Fig. \ref{fig:environment}, on one hand, experts in the application domain develop models to study the mechanisms of the problem.
These models have great interpretability, but are not suitable for model checking due to their inadequate coverage.
On the other hand, abstract models of the environment are created in the formal methods community to cover observable behaviors of the environment.
These models are usually abstracted from the interface between the system and the environment, and coverage can be easily quantified.

Formal relationships between the formal models and the domain models are needed in order to balance coverage and interpretability.
However, establishing connections require expertise in both formal methods and the application domain.
%As a result, experts from either domain can utilize model checking effectively, especially in application domains that require extensive domain knowledge.
% The balance between coverage and interpretability of the environment model was mostly overlooked in previous researches.
% The environment models are either too specific to provide adequate coverage \cite{}, or too abstract to provide interpretable counter-examples \cite{Jee}.
% Predicate abstraction, or over-approximation, has been proposed to simplify the system model during model checking \cite{Predicate}.
% One interesting property of over-approximation is that new behaviors are introduced into the abstract model, so that if a property is satisfied in the abstract model, it is also satisfied in the original model.
% The Counter-Example-Guided Abstraction and Refinement (CEGAR) framework was proposed so that over-approximation can be used to relieve the state-space explosion problem during model checking \cite{CEGAR}.
% The behavior increase property of over-approximation can be potentially used to cover more behaviors in the environment model. 

In \cite{STTT13}, Jiang et. al proposed the use of over-approximation \cite{Predicate} to increase the coverage of environment models in closed-loop model checking of implantable cardiac devices, and refine the environment models to provide interpretability to counter-examples.
Unfortunately, the proposed method requires abstraction rules based on extensive domain knowledge, which cannot be applied directly in other domains.
A domain-independent framework for environment modeling is essential for model checking to be effectively adopted.
\begin{figure}[t]
\centering
	\includegraphics[width=0.45\textwidth]{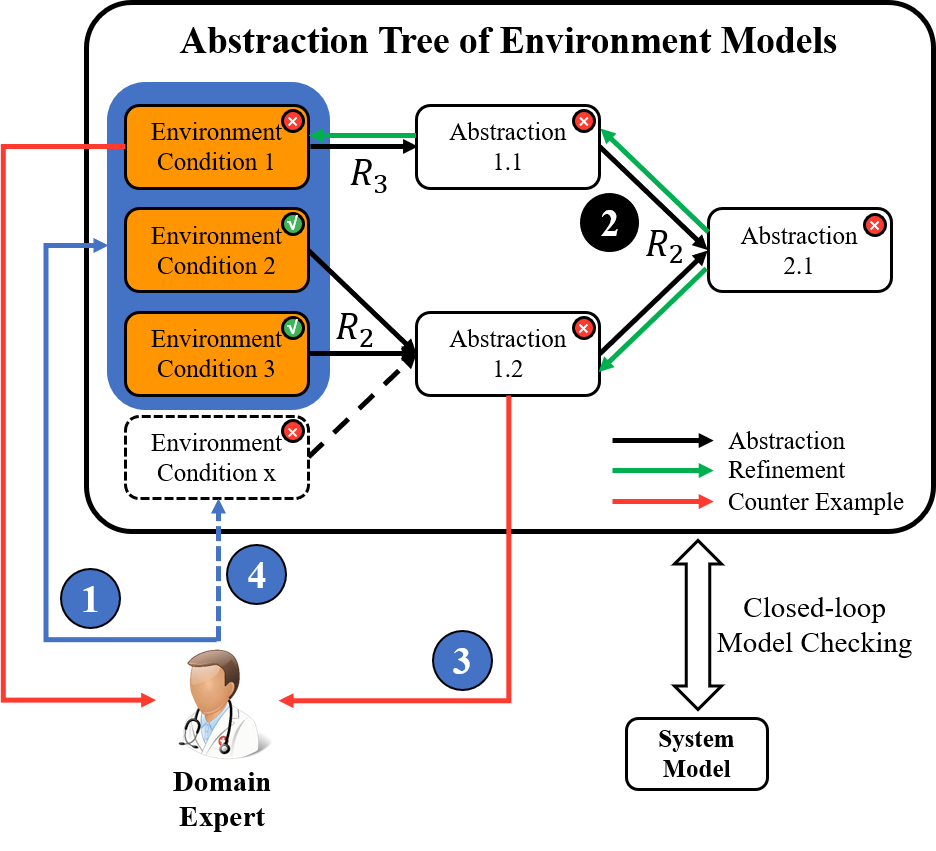}
	\caption{Experts in the application domain provide a set of base environment models (1), and the framework returns a set of counter-examples with the most refined context (3). The experts can also provide additional base models after analyzing the returned counter-examples and their corresponding levels of abstraction (4). Inside the framework an abstraction tree is created by abstracting the set of base models using domain-independent abstraction rules (2), which is hidden from the domain expert.}
	\label{fig:framework}
\end{figure}

\section{Domain-independent Model Checking Framework with Environment Abstraction \& Refinement}
In this project, we propose a model checking framework with environment abstraction \& refinement which can balance the coverage and interpretability of environment models during model checking.
The framework is also domain-independent such that the abstraction and refinement of environment models do not require domain-specific information.
%experts in the application domain can use the proposed framework without expertise in formal methods.

The framework is illustrated in Fig. \ref{fig:framework}, which involves four main steps:\\
\textbf{Step 1: Initial Set of Environment Models:}\\
A set of base environment models are first provided by domain experts, which represent prior knowledge of environment conditions that the CPS may encounter.
These base environment models does not provide adequate coverage, 
% Due to the variability of the environment conditions, the set of base environment models cannot be exhaustive, therefore coverage of environment conditions cannot be ensured.
but their execution traces, including counter-examples returned from the model checker, are interpretable by domain experts.\\
%In this framework, the base models are required to be modeled using timed automata.\\
\textbf{Step 2: Construction of the abstraction tree:} \\
Domain-independent abstraction rules are then applied to the base environment models so that the abstract model over-approximates the original model(s), covering more observable behaviors of the environment.
By abstracting and combining models of the environment, an \textit{abstraction tree} of environment models can be built. %, with the root model covering all possible observable behaviors.
%This process is illustrated in Algorithm 1.
% The list $forest$ maintains the roots of sub-trees.
% Initially $forest$ contains the base models.
% One or two roots and one applicable abstraction rule are selected using $SelectRule()$.
% The rule is then applied to the root(s), creating a new tree with the abstract model as root.
% $forest$ is updated with the new tree until it only contains one root.
%Note that the selection of abstraction rules and their application sequence affect the structure of the resulting abstraction tree, which can affect the interpretability of counter-examples.
This step is hidden from the experts in the application domain, so expertise in formal methods is not required for using the framework.\\
\textbf{Step 3: Model Checking and Counter-example Refinement:} \\
The safety and efficacy of the system model can then be validated using the abstraction tree of environment models.
%The process is illustrated in Algorithm 2.
The system model is first verified against the root environment model.
If the requirements are satisfied, the system model is safe under all possible environment conditions.
Otherwise, the system model is then verified against the environment model(s) that are children of the root environment model.
The process traverse the abstraction tree in the Breath-First Search (BFS) manner until 1) the leaves of the abstraction tree is reached, or 2) all children of the current environment model satisfy the requirement.
The counter-example(s) are attached to the abstraction tree, and returned to the domain experts for further analysis.\\
\textbf{Step 4: Environment Model Refinement:}\\
Depending on the "completeness" of the set of base models and the topology of the abstraction tree, the refined counter-examples returned may not correspond to the leaf nodes in the abstraction tree.
In this case the violations of requirement happen in environment conditions that are not included in the set of base models.
Moreover, the counter-examples returned may not have adequate context for comprehensive interpretation.
Domain experts can create new base models, which are refinements of the model that returned the counter-examples.
i.e. in Fig. \ref{fig:framework}, environment condition x can be created by "subtracting" environment condition 2 and 3 from Abstraction 1.2.

%Model checking starts from the root and if satisfied, its children must all satisfy p. If not satisfied, continue model checking with its children. After finishing the ABST, an abstraction tree with associated model checking result is produced. For further working, we can develop a software that visualizes the abstraction with associated model checking result. The interface provides an intuitive diagram, and users can easily get the counter example by clicking the red circle.

The framework hides domain knowledge in formal methods from experts in the application domain, so that model checking becomes a more friendly tool for validating CPSs.

\section{Abstraction Tree Construction with Timed Automata}
In order for the framework to achieve domain-independence, the application and selection of the abstraction rules should not contain knowledge in the application domain.
In this project, we use timed automata \cite{timed_automata} as modeling formalism and UPPAAL \cite{uppaal} as model checker.
Abstraction rules targeting the structure of timed-automata are proposed and their effect on observable environment behaviors are informally proved.
Formal proofs of the Theorems can be found in \cite{}.

% \begin{algorithm}[h]
%   \caption{Abstraction tree in closed-loop model checking}
%   \begin{algorithmic}[1]
%     \State $FIFO=enqueue(0)$;
%     \State model check $M_S|M_E^i$;
%     \If{property satisfied}
%     {
%         \Return Satisfied;
%     }
%     \ElsIf{property not satisfied}
%     {
%         \For{each $M_E^j\in Children(M_E^i)$}
%             \State model check  $M_S|M_E^j$;
%             \If{property satisfied}
%             {
%                 xxx;
%             }
%             \ElsIf{property not satisfied}
%             {
%                 xxx;
%             }
%             \EndIf
%         \EndFor
%     }
%     \EndIf
    
%     \label{code:recentEnd}
%   \end{algorithmic}
% \end{algorithm}

\subsection{Timed Automata and Model Checker UPPAAL}
Timed-automata \cite{timed_automata} is a formalism developed to model real-time systems.
It has the expressiveness for modeling complex system behaviors \cite{cps3}, and the simplicity for decidable reachability. 
Timed automata also supports non-determinism, which can be used to capture the uncertainty within the environment.
The framework proposed in this project is applicable when both the system and the environment are modeled using timed automata.

A timed automaton is a tuple $(L,l_0,X,A,E,G,I)$, where
\begin{enumerate}
    \item $L$ is a set of locations.
    \item $l_0\in L$ is the initial location.
    \item $X$ is the set of clocks.
    \item $A$ is a set of actions, including sending actions (a!) and receiving actions (a?).
    \item $E\subseteq L\times A\times 2^X \times L$. \\
    An edge (transition) $e\in E$ is a tuple $(l,a,r,l')$, where $l$ is the start location, $a$ is the action, $r$ is the set of clocks to be reset and $l'$ is the target location.
    \item $G: E\times 2^X \times 2^\mathbb{N} \rightarrow \Psi_G$ assigns guards to edges.\\
    $G$ can be written as $G(E,X,N)=\{ g_i(e_i,X_i,N_i)\;|\;i \in \mathbb{N}$ and $i\le len(E) \}$. Each $g_i$ denotes the guard of the edge $e_i$, which constrains the set of clocks in $X_i$ with the set of lower bounds $N_i$.
    \item $I: L\times 2^X \times 2^\mathbb{N} \rightarrow \Psi_I$ assigns invariants to locations.\\
    $I$ can be written as $I(L,X,M) = \{inv_i(l_i,X_i,M)\;|\;i \in \mathbb{N}$ and $i\le len(E) \}$. Each $inv_i$ denotes the invariant of the location $l_i$, which constrains the set of clocks in $X_i$ with the set of upper bounds $N_i$.
    \item \(\Psi\) is the clock constraints for clock variables \(X\).\\
    \(\Psi:=x \perp n \;\|\; \Psi_{1} \wedge \Psi_{2}\), where \(x\in X\),
    \(\perp \in\{\leq,\;\geq\},\) and \(n \in \mathbb{N}\).\\
    For particular guard and invariant clock constraints, we have
    \begin{itemize}
      \item
        \(\Psi_G \in\Psi^X\) and
        \(\Psi_G:=x \ge n \;\|\; \Psi_{1} \wedge \Psi_{2}\)
      \item
        \(\Psi_I \in\Psi^X\) and
        \(\Psi_I:=x \le n \;\|\; \Psi_{1} \wedge \Psi_{2}\)
    \end{itemize}
\end{enumerate}

%Here we pose stricter requirement on $G$ and $I$, such that $G$ only contains propositions in the form of $x\geq n$, and $I$ only contains propositions in the form of $x\leq m$.

Multiple timed automata can run in parallel and interact with each other via actions.
i.e. the system model and the environment model form a closed-loop system.
We use $\mathscr{A}_1|\mathscr{A}_2$ to represent automata composition. The semantics ~\cite{UPPAAL_Tutorial} is defined as a labelled transition system
\(<S, s_0, \rightarrow>\), where
\begin{enumerate}
\item
  \(S\subseteq L\times R^C\) is the set of states,
\item
  \(s_0 = <l_0, u_0>\) is the initial state,
\item
  \(u: C \rightarrow \mathbb{R}_{\geq 0}\) is the function of a clock valuation and
\item
  \(\rightarrow \subseteq S\times (R_{\geq 0} \cup A)\times S\) is the transition relation such that:
  \begin{itemize}
  \item
    \((l, u) \stackrel{d}{\rightarrow}(l, u+d)\) if \(\forall d^{\prime}: 0 \leq d^{\prime} \leq d \Longrightarrow u+d^{\prime} \in I(l,x,n)\), where \(x\in 2^X\) and \(n\in 2^\mathbb{N}\)
  \item
    \((l, u) \stackrel{a}{\rightarrow}\left(l^{\prime}, u^{\prime}\right)\)
    if there exists \(e=\left(l, a, r, l^{\prime}\right) \in E\) s.t. \(u \in G(e,x',n')\), \(u^{\prime}=[r \mapsto 0] u,\) and \(u^{\prime} \in I\left(l^{\prime},n'\right)\), where \(x'\in 2^X\) and \(n'\in \mathbb{N}\)
  \end{itemize}
\end{enumerate}

UPPAAL \cite{uppaal} is a model checking tool using timed automata as formalism, and is very friendly to people with little programming experience.
Users can model their system and its environment in a graphic interface, and counter-examples returned by the model checker are visualized in the simulator.\\
Fig. \ref{fig:TA_example} shows the composed timed automaton $Speaker|Translator$ that interact with each other via action $a1$ in UPPAAL.
The sending action $a1!$ in $Speaker$ is confined by guard $t>=5$ and invariant $t<=10$, which represents the uncertainty in behaviors $a1!$.

\begin{figure}[t]
\centering
	\includegraphics[width=0.35\textwidth]{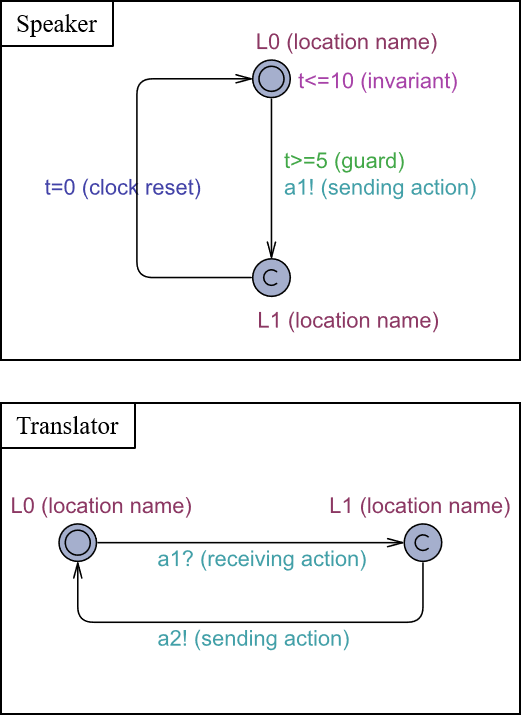}
	\caption{The composed model $Speaker|Translator$. The time interval between two consecutive action $a1$ is larger than 5 unit time and shorter than 10 unit time. The $Translator$ sends $a2$ immediately after receiving $a1$.}
	\label{fig:TA_example}
\end{figure}

\subsection{Prerequisite of the Framework}
Currently the framework is suitable for problems with the following constraints:
\begin{enumerate}
    
    \item The environment contains multiple independent agents interacting with each other via events.
    \item The system also interacts with the environment via events. 
    Only a subset of events in the environment are observable to the system, and the system operates base on the timing and patterns of these events.
    \item Both the system and the environment are modeled using timed automata.
    \item At a particular state of the environment, the observable events can occur within a timing interval $[T\_min, T\_max]$
    \item The differences among base environment models are parameters-only.
\end{enumerate}
\subsection{Coverage of Environment Behaviors}
% Environment of CPS oftentimes has multiple components interacting with each other.
% Uncertainties of the environment are caused by:
% \begin{itemize}
%     \item The number of components in the environment
%     \item The parameters of the components
%     \item The states of the components
% \end{itemize}
% which can be captured using non-determinism in timed automata. 

The system can only observe a subset of actions $A_o \subseteq A$ in the environment.
Environment behavior is defined as \textit{timed word} \cite{timed_automata} over observable actions $A_o \subseteq A$, which is a pair 
$(\Sigma,T)$ 
where $\Sigma = \sigma_1 \sigma_2 ...$, $\sigma_i\in A_o$ represents the sequence of actions, and $T=\tau_1\tau_2 ...$, $\tau_i\in \mathbb{R}$ represents the global time the actions happened.
The timed language of a timed automaton $\mathscr{A}$ is the set of all the possible timed words of $\mathscr{A}$, which is represented as $\mathscr{L}(\mathscr{A})$.
The coverage of environment behaviors is then measured on the "size" of the language.\\
Followings are the formal definition of timed sequence, timed word and timed language.\\
\textbf{Definition: Timed sequence}\\
A timed sequence~\cite{TheoryTA} $\tau = \tau_1 \tau_2 ...$ is an infinite sequence of time values $\tau_i \in R$ with $\tau_i > 0$, satisfying the following constraints:
\begin{enumerate}
    \item 
    Monotonicity
    \begin{itemize}
        \item $\tau$ increases strictly monotonically, i.e. for all $i \geq 1$, we have $\tau_{i+1} > \tau_i$;
    \end{itemize}
    \item
    Progress
    \begin{itemize}
        \item For every $t \in R$, there is some $i\geq 1$ such that $\tau_i > t$.
    \end{itemize}
\end{enumerate}
\textbf{Definition: Timed word}\\
A timed word~\cite{TheoryTA} over observable actions $A_o \subseteq A$ is a pair $(\sigma, \tau)$ where $\sigma = \sigma_1 \sigma_2 ...$ were each $\sigma_i$ indicates whether an observable action is observed.\\
For example, let $A=\{a_{o1},a_{o2},a_{u1},a_{o3},a_{u2}\}$ is the set of all actions, $A_o=\{a_{o1},a_{o2},a_{o3}\}$ is the set of observable actions, and $A_{u}=\{a_{u1},a_{u2}\}$ is the set of unobservable actions.\\
If at time $\tau_1$, no observable actions is observed, then $\sigma_1=\langle 0,0,0 \rangle$.\\
If at time $\tau_2$, $a_{o1}$ and $a_{o3}$ are observed, then $\sigma_2=\langle 1,0,1 \rangle$.\\
\textbf{Definition: Timed language }\\
For a timed automaton $\mathscr{A}=(L,l_0,X,A,E,G,I)$, where $A$ is the set of actions. The timed language~\cite{TheoryTA} of $\mathscr{A}$ is the set of all the possible timed words of $\mathscr{A}$.

\subsection{Domain-independent Abstraction Rules}
A set of abstraction rules on timed automata is defined that can increase the coverage of observable behaviors of the environment.
%This is equivalent to abstract the environment models so that the language of the abstract model contains the language of the refined model(s) \cite{timed_automata}. 
The correctness of the abstraction rules are informally proved to provide intuition for the audience of this paper.
Interested audience can find formal proofs in the appendix.

%\textbf{Definition 1: Enabled Interval} of a Transition:  The interval a transition  $e = (l,a,r,l')$ can be taken depends on the guard of the transition and the invariant of the outgoing location $inv(l)$.

\subsubsection{$\mathcal{R}_1$: Increase Transition Uncertainty$\;\;\;$}
\textbf{Intuition: } A transition $e=(l,a,r,l')$ is \textit{enabled} when the guard assigned to the transition $g(e,X_g,N)$ evaluates to true, and it has to be taken when the invariant of its source location $inv(l,X_i,M)$ is about to be violated due to the increase of $X_i$. 
The interval $[N,M]$ represents the uncertainty when event $a$ can occur.
If the interval is expanded, intuitively the constraints on sending the event $a$ are relaxed, and the new model covers more behaviors.\\
%Later on we will see that $\mathcal{R}_1$ is also the basis for other abstraction rules.\\
\textbf{Prerequisite: }None.\\
\textbf{Rule:} Given a timed automaton
$\mathscr{A}_1=(L,l_0,X,A,E,G,I)$, and two 
non-negative vector $\Delta_G=\{\delta^G_i\;|\;i\in \mathbb{N}$ and $i\le len(G)\}$ and $\Delta_I=\{\delta^I_i\;|\;i\in \mathbb{N}$ and $i\le len(I)\}$,
create another timed automaton 
$\mathscr{A}_2=\mathcal{R}_1(\mathscr{A},\Delta_G,\Delta_I)=(L,l_0,X,A,E,G^R,I^R)$ s.t.
\begin{itemize}
    \item $\forall g_i(e_i,X_i,N^1_i)\in G$, 
 $N^R_i=N^1_i-\delta^G_i$ for all $g^R_i(e_i,X_i,N^R_i)\in G^R$
 \item $\forall inv_i(l_i,X_i,M^1_i)\in I$,
 $M^R_i=M^1_i-\delta^I_i$ for all $inv^R_i(l_i,X_i,M^R_i)\in I^R$.
\end{itemize}

% \begin{theorem}
% $\mathscr{A}\preceq_{t}\mathcal{R}_1(\mathscr{A})$
% \end{theorem}
% \textbf{Informal Proof: }

\begin{theorem}
$\mathscr{L}(\mathscr{A})\subseteq \mathscr{L}(\mathcal{R}_1(\mathscr{A},\Delta_G,\Delta_I))$ for non-negative $\Delta_G,\Delta_I$.
\end{theorem}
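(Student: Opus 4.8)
The plan is to argue by a forward simulation on the operational semantics $\langle S, s_0, \rightarrow \rangle$ given above, exploiting the fact that $\mathscr{A}$ and $\mathscr{A}_2 = \mathcal{R}_1(\mathscr{A},\Delta_G,\Delta_I)$ share the identical tuple $(L,l_0,X,A,E)$ and differ only in the clock constraints $G$ and $I$. First I would record the monotonicity that drives everything. Since every $\delta^G_i \ge 0$, each abstracted guard $x \ge N^R_i$ is implied by the original $x \ge N^1_i$, so guards become weaker; and since the abstraction is intended (per the Intuition) to enlarge the interval $[N,M]$ in which $a$ may occur, each invariant upper bound is loosened, so the original $x \le M^1_i$ implies the abstracted $x \le M^R_i$. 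In valuation terms: for every $u$ and every corresponding edge $e$ and location $l$, we have $u \in G(e) \Rightarrow u \in G^R(e)$ and $u \in I(l) \Rightarrow u \in I^R(l)$. I would state the invariant orientation explicitly at the outset, because the whole argument hinges on both constraints relaxing simultaneously.

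Next I would take the candidate simulation relation to be the identity on states, $R = \{(\langle l,u\rangle,\langle l,u\rangle)\}$, and show it is a forward simulation from the transition system of $\mathscr{A}$ into that of $\mathscr{A}_2$. This splits into the two clauses of $\rightarrow$. For a delay step $\langle l,u\rangle \xrightarrow{d} \langle l,u+d\rangle$ in $\mathscr{A}$, the premise $u+d' \in I(l)$ for all $0 \le d' \le d$ gives, by applying the invariant monotonicity pointwise in $d'$, $u+d' \in I^R(l)$ for all such $d'$, so the same delay is enabled in $\mathscr{A}_2$. For an action step $\langle l,u\rangle \xrightarrow{a} \langle l',u'\rangle$ via $e=(l,a,r,l')$, the guard premise $u \in G(e)$ yields $u \in G^R(e)$; the reset $u' = [r \mapsto 0]u$ is literally the same map in both automata since $r$ belongs to the shared $E$; and the landing premise $u' \in I(l')$ yields $u' \in I^R(l')$. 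Hence the same edge fires in $\mathscr{A}_2$ to the matching state $\langle l',u'\rangle$.

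With the one-step property in hand, I would induct on the length of a run to conclude that every finite run of $\mathscr{A}$ starting from $s_0$ is matched step-for-step by a run of $\mathscr{A}_2$ from the same $s_0$, emitting exactly the same interleaving of delays and actions. Because the matched run reuses the identical delays, the two runs agree on every action symbol $\sigma_i$ and on the accumulated timestamp $\tau_i$, so the time sequence is preserved verbatim and the induced timed word over the observable alphabet $A_o$ is identical; in particular the monotonicity and progress conditions transfer for free. Passing to the limit over finite prefixes (equivalently, applying the construction to the generating run of any prescribed infinite timed word of $\mathscr{A}$) gives $\mathscr{L}(\mathscr{A}) \subseteq \mathscr{L}(\mathscr{A}_2)$.

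The main obstacle I anticipate is the bookkeeping around the invariant clause rather than the guard clause. A guard need only hold at the single instant the action fires, so its relaxation transfers trivially; the invariant, by contrast, is universally quantified over the whole delay interval $\forall d'\colon 0 \le d' \le d$, and it must additionally be re-checked at the landing valuation $u'$ of every action step. I must therefore apply the monotonicity uniformly across the interval and at both the source and target locations. The second delicate point, which I would settle before starting, is the orientation of the invariant update: the pointwise implication $u \in I(l) \Rightarrow u \in I^R(l)$ used throughout is valid precisely when the abstraction loosens the upper bound (enlarging $[N,M]$), matching the stated intuition; once this is fixed, the projection of runs to observable timed words is entirely routine.
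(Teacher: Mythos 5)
Your proof is correct and follows essentially the same route as the paper's: an identity-relation timed simulation from $\mathscr{A}$ into $\mathcal{R}_1(\mathscr{A},\Delta_G,\Delta_I)$ (the appendix's Theorem 4), followed by the standard argument that timed simulation implies timed-language inclusion (the appendix's Theorem 5). Your explicit resolution of the invariant orientation (the upper bound must be loosened, i.e.\ $M^R_i=M^1_i+\delta^I_i$, which is what the appendix actually uses, rather than the $-\delta^I_i$ written in the rule) and your check of the target-location invariant at action steps are welcome refinements, but not a different approach.
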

\textbf{Informal Proof: }First we prove that $\mathcal{R}_1(\mathscr{A},\Delta_G,\Delta_I)$ is a \textit{timed-simulation} of $\mathscr{A}$, which further implies that $\mathscr{L}(\mathscr{A})\subseteq \mathscr{L}(\mathcal{R}_1(\mathscr{A},\Delta_G,\Delta_I))$.
True subset can then be proved by construction, as there always exists timed words in $\mathcal{R}_1(\mathscr{A},\Delta_G,\Delta_I)$ which are not in $\mathscr{A}$.

\subsubsection{$\mathcal{R}_2$: Merge Models with the Same Structure}
\textbf{Intuition: } When the differences between two timed automata are confined to $N$ in guards $G(E,X,N)$ and $M$ in invariants $I(L,X,M)$, creating a timed automaton with the minimum of $N$ and the maximum of $M$ covers the behaviors of both models, as well as additional behaviors.
%This rule allows us to reduce the number of environment models while increasing behavior coverage.\\
\textbf{Prerequisite: } The differences between $\mathscr{A}_1$ and $\mathscr{A}_2$ should be confined to the $N$ of guards $g(e,X,N)\in G$ and the $M$ of invariants $inv(l,X,M)$.\\
\textbf{Rule:} Given two timed automata
$\mathscr{A}_1=(L,l_0,X,A,E,G^1,I^1)$ and\\
$\mathscr{A}_2=(L,l_0,X,A,E,G^2,I^2)$,
create another timed automaton
$\mathscr{A}_3=\mathcal{R}_2(\mathscr{A}_1,\mathscr{A}_2)=(L,l_0,X,A,E,G^3,I^3)$ such that 
\begin{itemize}
    \item $\forall g^1_i(e_i,X_i,N^1_i)\in G^1$ and $\forall g^2_i(e_i,X_i,N^2_i)\in G^2$, $N^3_i=elm\_min(N^1_i,N^2_i)$ for all $g^3_i(e_i,X_i,N^3_i)\in G^3$
    \item $\forall inv^1_i(l_i,X_i,M^1_i)\in I^1$ and $\forall inv^2_i(l_i,X_i,M^2_i)\in I^2$, $M^3_i=elm\_max(M^1_i,M^2_i)$ for all $inv^3_i(l_i,X_i,M^3_i)\in I^3$
    
\end{itemize}
where $elm\_min()$ and $elm\_max()$ calculate element-wise minimum and maximum of vectors.

\begin{theorem}
$\mathscr{L}(\mathscr{A}_1) \cup \mathscr{L}(\mathscr{A}_2) \subseteq \mathscr{L}(\mathscr{A}_3)$
\end{theorem}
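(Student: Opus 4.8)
The plan is to reduce this statement to Theorem~1 rather than reprove language inclusion from scratch. The prerequisite guarantees that $\mathscr{A}_1$, $\mathscr{A}_2$, and the merged automaton $\mathscr{A}_3$ share the identical skeleton $(L,l_0,X,A,E)$ and differ only in the constant vectors attached to guards and invariants. By construction $N^3_i = elm\_min(N^1_i,N^2_i)$ and $M^3_i = elm\_max(M^1_i,M^2_i)$, so every guard lower bound of $\mathscr{A}_3$ is no larger than the corresponding bound of $\mathscr{A}_1$, and every invariant upper bound of $\mathscr{A}_3$ is no smaller. In other words, passing from $\mathscr{A}_1$ to $\mathscr{A}_3$ only widens each timing interval $[N,M]$, which is precisely the effect of the uncertainty-increasing rule $\mathcal{R}_1$.

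Concretely, first I would exhibit $\mathscr{A}_3$ as an instance of $\mathcal{R}_1$ applied to $\mathscr{A}_1$. Define the non-negative vectors $\Delta_G^{1}$ and $\Delta_I^{1}$ componentwise by $\delta^{G,1}_i = N^1_i - N^3_i$ and $\delta^{I,1}_i = M^3_i - M^1_i$; these are non-negative exactly because $N^3_i = elm\_min(N^1_i,N^2_i) \le N^1_i$ and $M^3_i = elm\_max(M^1_i,M^2_i) \ge M^1_i$. With this data $\mathscr{A}_3 = \mathcal{R}_1(\mathscr{A}_1,\Delta_G^{1},\Delta_I^{1})$, so Theorem~1 yields $\mathscr{L}(\mathscr{A}_1)\subseteq\mathscr{L}(\mathscr{A}_3)$. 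Repeating the argument with $\delta^{G,2}_i = N^2_i - N^3_i$ and $\delta^{I,2}_i = M^3_i - M^2_i$ gives $\mathscr{A}_3 = \mathcal{R}_1(\mathscr{A}_2,\Delta_G^{2},\Delta_I^{2})$ and hence $\mathscr{L}(\mathscr{A}_2)\subseteq\mathscr{L}(\mathscr{A}_3)$. Taking the union of the two inclusions closes the proof.

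If instead one prefers a self-contained argument, I would take an arbitrary accepting run of $\mathscr{A}_1$ witnessing a timed word $w\in\mathscr{L}(\mathscr{A}_1)$ and replay the very same sequence of delay and action steps in $\mathscr{A}_3$. Each action transition uses a guard $x\ge N^1_i$, and since $N^3_i\le N^1_i$ the valuation still satisfies $x\ge N^3_i$; each delay transition keeps the clocks within $x\le M^1_i$, and since $M^3_i\ge M^1_i$ the same valuations satisfy $x\le M^3_i$ throughout. I expect the one place needing care to be the delay transitions, where the invariant must hold for the entire delay interval rather than merely at its endpoints; but monotonicity of the relaxation makes the original interval still admissible, so the replayed run is valid and produces the same observable timed word $w$. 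The genuinely load-bearing fact in either route is the monotonicity of the element-wise minimum and maximum, which is what lets a single merged automaton dominate both inputs simultaneously.
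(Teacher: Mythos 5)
Your proposal is correct and follows essentially the same route as the paper: both exhibit $\mathscr{A}_3$ as $\mathcal{R}_1(\mathscr{A}_1,\Delta_G,\Delta_I)$ and $\mathcal{R}_1(\mathscr{A}_2,\Delta_G,\Delta_I)$ for suitable non-negative difference vectors derived from the element-wise min/max, and then invoke Theorem~1 twice to obtain the two inclusions whose union gives the result. Your write-up is in fact slightly more careful than the paper's, since you explicitly justify the non-negativity of the $\Delta$ vectors and sketch a fallback direct replay argument.
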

\textbf{Proof: }If we define $\Delta_G=\mathscr{A}_1.G.N-elm\_min(\mathscr{A}_1.G.N,\mathscr{A}_2.G.N)$ and $\Delta_I=elm\_max(\mathscr{A}_1.I.M,\mathscr{A}_2.I.M)-\mathscr{A}_1.I.M$, we have $\mathscr{A}_3=\mathcal{R}_1(\mathscr{A}_1,\Delta_G,\Delta_I)$.
Then according to Theorem 1, $\mathscr{L}(\mathscr{A}_1)\subset \mathscr{L}(\mathscr{A}_3)$.
Similarly we have $\mathscr{L}(\mathscr{A}_2)\subset \mathscr{L}(\mathscr{A}_3)$, therefore the theorem holds.

\subsubsection{$\mathcal{R}_3$: Remove Internal Receiving Actions}
\textbf{Intuition: }Edges with receiving actions can be taken only when the action is sent.
Therefore receiving actions are equivalent to guards on edges.
If receiving actions are removed and the action is not observable to the system, it is equivalent to setting the guard to True, or setting the $[N,M]$ interval to $[0,\infty]$, therefore increase behavior coverage.\\
\textbf{Prerequisite: }There exists $a\notin A^O$ and $a$ is a broadcast channel.
There is also no guard on transition $e=(l,a,r,l')$\\
\textbf{Rule: }
For a timed automaton $\mathscr{A}=\mathscr{A}_1|\mathscr{A}_2| \cdots \mathscr{A}_N$, if there exists\\ $\mathscr{A}_i=(L^i,l^i_0,X^i,A^i,E^i,G^i,I^i)$ and 
$\mathscr{A}_j=(L^j,l_0^j,X^j,A^j,E^j,G^j,I^j), i,j\in [1,N]$ such that $ a^i_m$ is a sending action, $a^j_n$ is a receiving action and $a^i_m,a^j_n\notin A^O$, create a new timed automaton $\mathscr{A}'=\mathcal{R}_3(\mathscr{A})=\mathscr{A}_1|\mathscr{A}_2\cdots\mathscr{A}_n$ such that $a^j_n=\emptyset$ for $\mathscr{A}_j$.

\begin{theorem}
$\mathscr{L}(\mathscr{A}')\subseteq \mathscr{L}(\mathcal{R}_3(\mathscr{A}))$ 
\end{theorem}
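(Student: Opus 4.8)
The plan is to unfold the two sides of the stated inclusion using the rule's own definition before attempting any semantic argument. The rule for $\mathcal{R}_3$ introduces $\mathscr{A}'$ by the equation $\mathscr{A}'=\mathcal{R}_3(\mathscr{A})=\mathscr{A}_1|\mathscr{A}_2|\cdots|\mathscr{A}_N$, so the automaton named on the left of $\mathscr{L}(\mathscr{A}')\subseteq\mathscr{L}(\mathcal{R}_3(\mathscr{A}))$ and the one named on the right denote the very same object: in both, the single internal receiving action $a^j_n$ of the component $\mathscr{A}_j$ has been set to $\emptyset$, and every other component and every tuple entry $(L,l_0,X,A,E,G,I)$ is left untouched. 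Accordingly, the first step is simply to record that $\mathscr{A}'$ and $\mathcal{R}_3(\mathscr{A})$ are identical tuples, so there is no residual transformation separating them.

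From there I would invoke the semantics recalled earlier in the excerpt. The timed language $\mathscr{L}(\cdot)$ is a function of the labelled transition system $\langle S,s_0,\rightarrow\rangle$ induced by the automaton, and that transition system is itself determined solely by the tuple and the composition operator $|$. Two syntactically identical automata therefore generate the identical set of delay and action transitions, hence the identical set of runs, hence the identical set of timed words over the observable actions $A_o$. This gives the equality $\mathscr{L}(\mathscr{A}')=\mathscr{L}(\mathcal{R}_3(\mathscr{A}))$, and the claimed inclusion follows at once from reflexivity of $\subseteq$. The only genuinely checkable content is the bookkeeping identity $\mathscr{A}'=\mathcal{R}_3(\mathscr{A})$; once it is granted, no timed-simulation construction of the kind used for Theorems 1 and 2 is needed.

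I do not expect a real obstacle, and that is itself the point worth stating explicitly: the difficulty present in the sibling theorems, namely relating a finer automaton to a strictly coarser one, is absent here precisely because both sides of the worded inclusion denote the coarser (already abstracted) automaton. I would close with a brief remark to preempt confusion, noting that the abstraction guarantee motivated by the Intuition paragraph, under which deleting the unobservable synchronization $a^j_n$ relaxes the corresponding guard to True and so enlarges the covered behaviors, is the inclusion $\mathscr{L}(\mathscr{A})\subseteq\mathscr{L}(\mathscr{A}')$ with the \emph{unabstracted} $\mathscr{A}$ on the left; that substantive direction, which would be proved by exhibiting a timed simulation of $\mathscr{A}$ by $\mathscr{A}'$ exactly as for $\mathcal{R}_1$, is logically distinct from the statement as worded, which collapses to an equality.
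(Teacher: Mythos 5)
You correctly spotted that the theorem as printed is vacuous: the rule itself defines $\mathscr{A}'=\mathcal{R}_3(\mathscr{A})$, so the two sides of the stated inclusion name the same automaton and the claim collapses to reflexivity of $\subseteq$. That diagnosis is right, and the paper's own proof confirms it is a typo rather than the intended content: although the proof text carries the same notational slip, what it actually argues is the substantive inclusion $\mathscr{L}(\mathscr{A})\subseteq\mathscr{L}(\mathcal{R}_3(\mathscr{A}))$, with the \emph{unabstracted} composition on the left. The paper does this by exhibiting a timed simulation of $\mathscr{A}$ by $\mathcal{R}_3(\mathscr{A})$ --- since the receiving action $a^j_n$ is removed and the affected edge carries no guard (a stated prerequisite), that edge is always enabled in the abstraction, so whenever the broadcast synchronization fires in $\mathscr{A}$ the corresponding edge in $\mathcal{R}_3(\mathscr{A})$ can be taken at the same instant --- and then invoking the appendix result that timed simulation implies language inclusion, with a proof-by-construction remark for strictness.

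The genuine gap is that your proposal consigns exactly this content to a one-sentence closing remark and never carries it out, so as a proof of the theorem the paper intends and proves, the simulation argument is missing. Moreover, that sentence understates the work: the argument is not ``exactly as for $\mathcal{R}_1$.'' Rule $\mathcal{R}_1$ perturbs numeric bounds on edges whose labels are unchanged, so the mimicking move has the same action and the observable clause of the simulation definition suffices. Rule $\mathcal{R}_3$ deletes a synchronization label, so the matching move in $\mathcal{R}_3(\mathscr{A})$ is no longer labelled $a^j_n$ at the component level; one needs the unobservable-action clause of the paper's timed-simulation definition (matching $a$ by $a\,|\,\epsilon$, legitimate precisely because $a^i_m,a^j_n\notin A^O$), the no-guard prerequisite (otherwise the freed edge need not be enabled when the sender fires), and the broadcast-channel assumption (so removing the receiver neither blocks nor alters the sender's transitions). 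Spelling this out is the actual mathematical content of the theorem; a complete answer would state the corrected inclusion $\mathscr{L}(\mathscr{A})\subseteq\mathscr{L}(\mathcal{R}_3(\mathscr{A}))$ and give that simulation, rather than stopping at the observation that the printed statement is degenerate.
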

\textbf{Proof: } We first prove that $\mathcal{R}_3(\mathscr{A})$ is a timed simulation of $\mathscr{A}$. 
Since a receiving action is removed from a transition, that transition is always enabled.
Therefore when the sending action is taken, the new transition is enabled and can be taken at the same time, which satisfies the timed simulation requirement.
Timed simulation ensures $\mathscr{L}(\mathscr{A}')\subseteq \mathscr{L}(\mathcal{R}_3(\mathscr{A}))$.
We can then use prove by construction to show that $\mathscr{A}'$ has timed words that are not in $\mathscr{A}$, therefore the theorem holds.

% \subsection{Applicability of the Framework}
% Currently the framework is suitable for problems with the following constraints:
% \begin{enumerate}
    
%     \item The environment contains multiple independent agents interacting with each other via events.
%     \item The system also interacts with the environment via events. 
%     Only a subset of events in the environment are observable to the system, and the system operates base on the timing and patterns of these events.
%     \item Both the system and the environment are modeled using timed automata.
%     \item At a particular state of the environment, the observable events can happen within a timing interval $[T\_min, T\_max]$
%     \item The differences among base environment models are parameters-only.
% \end{enumerate}
%Note that Constraint 4 is only temporary and can be relaxed in future development.
In the next section, we use a simple case study to demonstrate the application of the proposed framework.

\begin{figure*}[t]
\centering
	\includegraphics[width=0.85\textwidth]{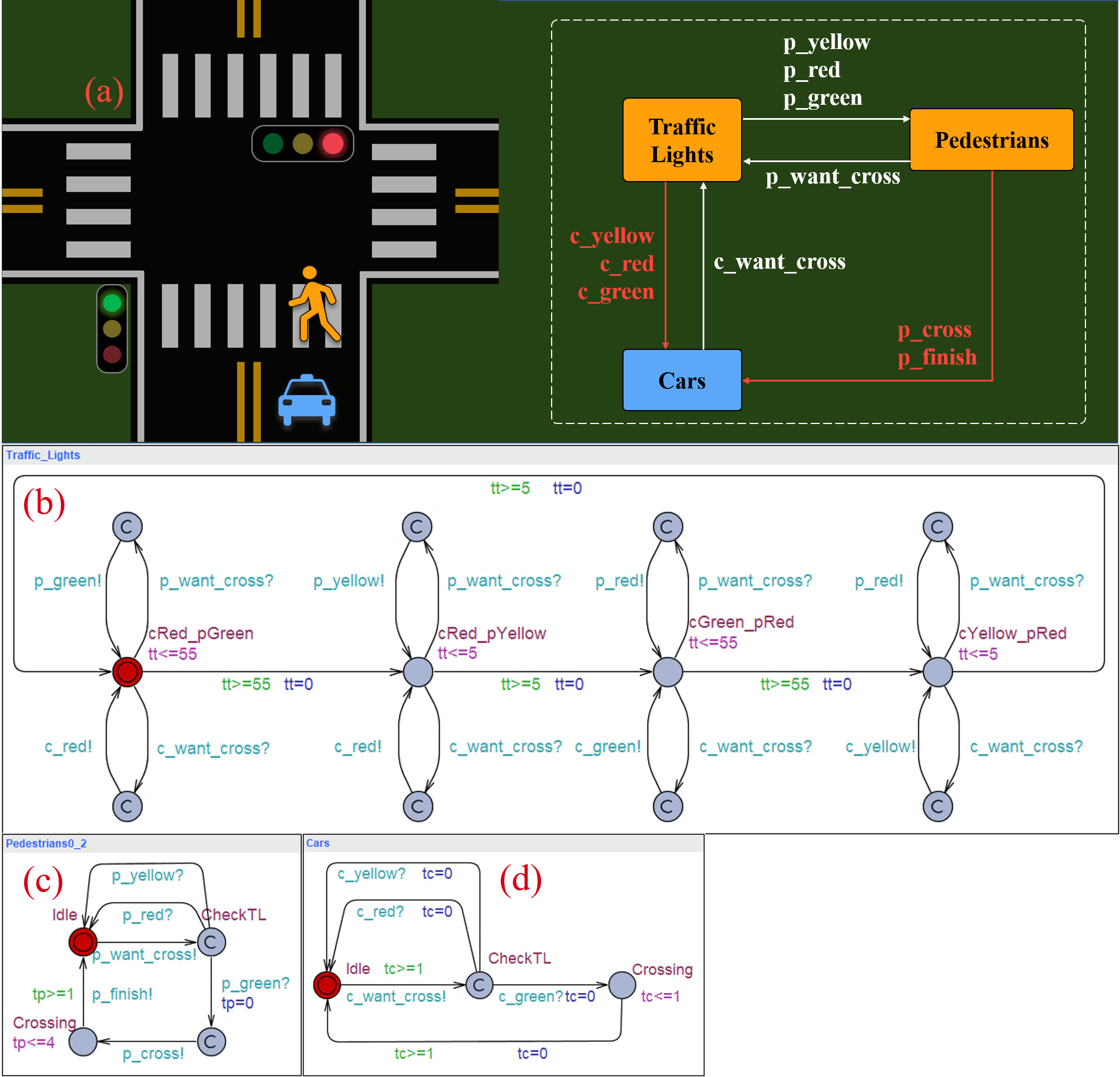}
	\caption{(a) The environment of the car includes a pedestrian and a traffic light. Actions in red represent observable actions. (b) The controller of the traffic lights sends the color of the light after receiving cross requests. (c) The model of a regular pedestrian who crosses only during green light within [1,4] seconds. (d) A simple and faulty car controller which crosses whenever the light is green.}
	\label{fig:ModelSchematic}
\end{figure*}

\section{Case Study: Ensuring Pedestrian Safety in Autonomous Driving}
Autonomous vehicles are CPSs which are required to safely operate within complex environment with large variabilities.
The environment consists of multiple agents with different states and parameters.
In this case study, we focus on a simple scenario in which an autonomous vehicle is crossing an intersection with traffic lights, with one pedestrian who may cross the road in front of the car (Fig. \ref{fig:ModelSchematic}.(a)).
%The interactions among components are shown in Fig. \ref{fig:ModelSchematic}.(b).
The environment for the autonomous vehicle contains two components: the traffic lights and the pedestrian.
The autonomous vehicle can observe the color of the active light as well as the pedestrian's crossing and finishing actions.
The safety property is to prevent collision with the pedestrian, such that the car and the pedestrian cannot cross at the same time.
%The safety property is validated using the proposed framework by following the three steps.

\subsection{Step 1: Base Environment Models}
Although traffic light is also part of the environment, due to its lack of variability, only the pedestrian model will be abstracted.
Domain experts can also decide to exclude certain components based on prior knowledge.
As shown in Fig. \ref{fig:absTree}, we start with two base pedestrian models: $Pedestrian0\_2$ who complies to traffic rules, and $Pedestrian0\_1$ who may cross the road when the traffic light is red.
% It sends event $p\_want\_cross!$ to the traffic light, and only start crossing the road when the traffic light returns $p\_green$ event.
% Crossing the road takes 1 to 4 seconds.
% $Pedestrian0\_1$ has the same model structure as $Pedestrian0\_2$.
% However, it starts crossing the road without analysing the events returned from the traffic light, which represents a reckless pedestrian.
\subsection{Step 2: Construction of the Abstraction Tree}
Although the base environment models already covers uncertain behaviors of the pedestrian (i.e. the timing of the cross intention and the duration of the cross action), the set of base environment models does not cover all possible observable behaviors of a pedestrian.
Abstraction rules were applied to the base models to construct the abstraction tree:\\
$\mathcal{R}_1$ was applied to both $Pedestrian0\_1$ and $Pedestrian0\_2$, increasing the time range for crossing the road from $[1,4]$ to $[1,15]$ and $[0,10]$ respectively, resulting in abstract models $Pedestrian1\_1$ and $Pedestrian1\_2$.
$\mathcal{R}_3$ was then applied to $Pedestrian1\_2$, removing interactions between the traffic light and the pedestrian, resulting in $Pedestrian2\_1$.
$Pedestrian2\_1$ and $Pedestrian1\_1$ now have the same structure, and therefore can be merged by $\mathcal{R}_2$, resulting in $Pedestrian3\_1$.

In this example we use the abstraction tree with $Pedestrian3\_1$ as root, although the behavior coverage of $Pedestrian3\_1$ can still be improved by applying $\mathcal{R}_1$. 
% Note that the abstraction tree in Fig. \ref{fig:absTree} is only one feasible abstraction tree.
% Different selection and application sequence of abstraction rules may result in different abstraction trees.
\begin{figure*}[h]
\centering
	\includegraphics[width=1.0\textwidth]{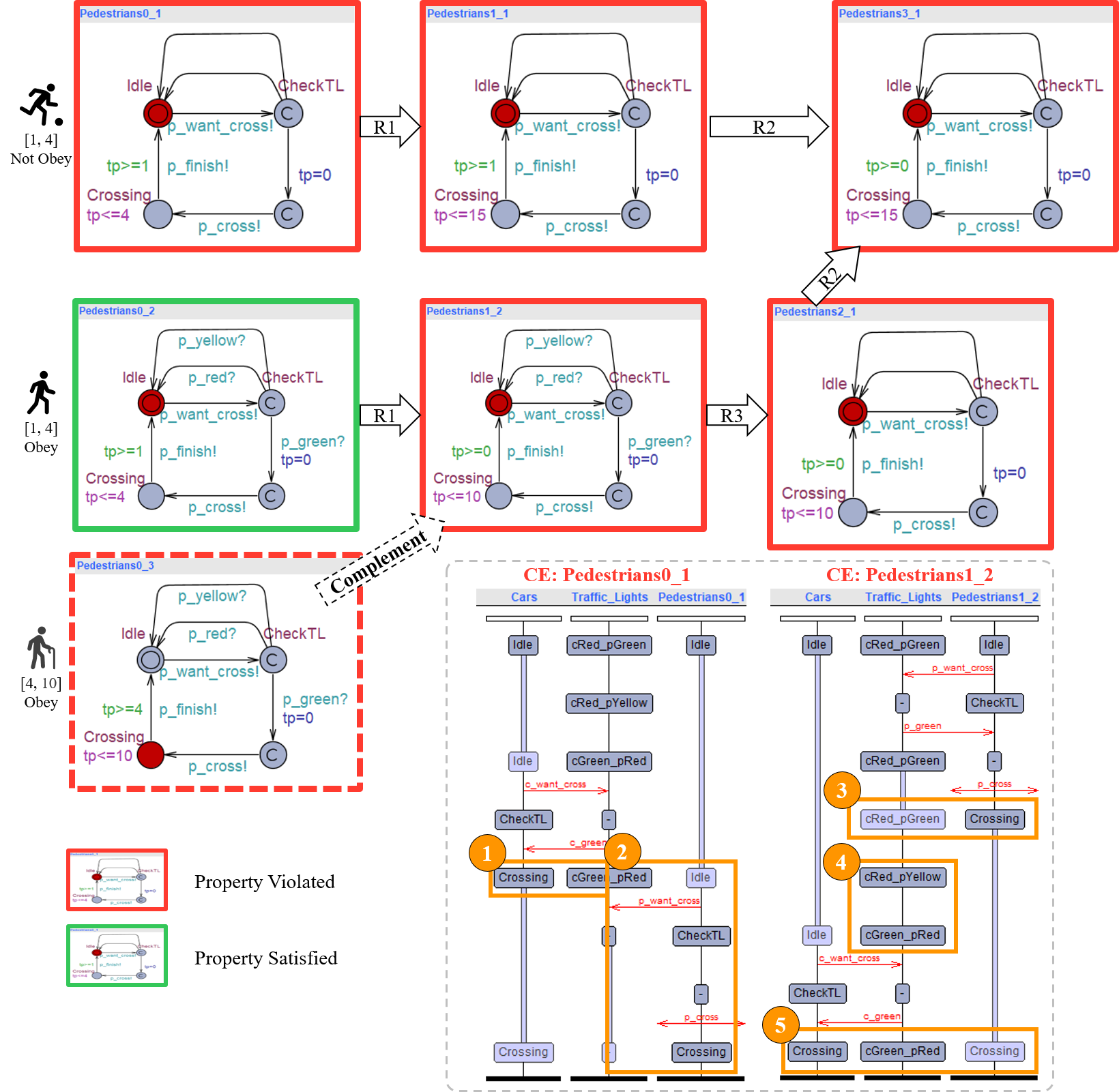}
	\caption{Abstraction tree of the environment (pedestrian) and two counter-examples returned from $Pedestrian0\_1$ and $Pedestrian1\_2$. }
	\label{fig:absTree}
\end{figure*}

\subsection{Step 3: Model Checking and Counter-example Refinement}
The abstraction tree of pedestrian models can then be used for closed-loop model checking of control algorithms of the autonomous vehicle.
In order to demonstrate the advantage of using the abstraction tree, we use a simple and faulty controller that crosses the road whenever the traffic light on its side is green (Fig. \ref{fig:ModelSchematic}.(d)). 
%Model checking can be used to verify whether the controller can avoid collision with the pedestrian.
This safety property is specified using TCTL language as A[] not (P.Crossing and C.Crossing).
% The controller was first paired with the traffic light as well as $Pedestrian3\_1$.
% The property was violated and the counter-example is saved under $Pedestrian3\_1$ in the abstraction tree.
% The pedestrian model was then switched to the children of $Pedestrian3\_1$.
After traversing the abstraction tree, two refined counter-examples were returned which correspond to node $Pedestrian0\_1$ and $Pedestrian1\_2$ in the abstraction tree (Fig. \ref{fig:absTree}).

The counter-example from $Pedestrian0\_1$ is as expected since the pedestrian may cross the road when the traffic light is red ((2) in Fig. \ref{fig:absTree}), while the car is already crossing the road ((1) in Fig. \ref{fig:absTree}).

The counter-example from $Pedestrian1\_2$ shows a different mechanism.
Both the pedestrian and the car started to cross the road when the traffic light on their corresponding side was green ((3) and (5) in Fig. \ref{fig:absTree}).
The traffic light switched when the pedestrian was still crossing the road ((4) in Fig. \ref{fig:absTree}), triggering the crossing of the car and collision with the pedestrian.
\subsection{Step 4: Environment Model Refinement}
The safety property was violated in $Pedestrian1\_2$, but was satisfied in its child $Pedestrian0\_2$.
In order to pinpoint the environment condition in which the counter-example occurred, a new base model $Pedestrian0\_3$ can be obtained by "subtracting" $Pedestrian0\_2$ from $Pedestrian1\_2$ (Fig. \ref{fig:absTree}), and the property was also violated with the same counter-example mechanism. 
From $Pedestrian0\_3$ we can see that the collision happened due to the long crossing time of the pedestrian, which provided more interpretation to the counter-example.

\section{CONCLUSION}
%The safety and efficacy of Cyber-Physical Systems (CPSs) need to be evaluated under all possible environment conditions.
Model checking of CPSs requires environment models that not only cover all possible environment conditions, but also provide interpretablity to the counter-examples.
Balancing these conflicting requirements requires expertise in both formal methods and the application domain, which prevents model checking from being effectively adopted for validation of CPSs.
In this project, a set of domain-independent abstraction rules for timed automata were developed to increase the coverage of environment models. 
A domain-independent framework for abstraction and refinement of environment models was proposed for model checking of CPSs.
%Domain-independent abstraction rules increase behavior coverage of environment models, and an abstraction tree is created without knowledge in the application domain.
The framework balances coverage and interpretability in environment models, and experts in the application domain can use model checking effectively without expertise in formal methods. 

Currently the base environment models are required to have the same model structure.
The next step is developing new abstraction rules that can remove locations, even components from environment models, so that this constraint can be relaxed.
The sequence for abstraction rule application may affect the completeness and abstraction level of counter-examples.
The next step is identifying the optimal sequence for abstraction rule application, and quantification of coverage.

\bibliographystyle{unsrt}
\bibliography{bibliography}

\appendix
\section{Formal proofs of abstractions rules}
First of all, we need to refer to the definitions of timed simulation and transition enabled interval.\\
\textbf{Definition: Timed simulation}\\
For two timed automata \(\mathscr{A}_1=(L^1,l^1_0,C^1,A^1,E^1,G^1,I^1)\) and\\
\(\mathscr{A}_2=(L^2,l^2_0,C^2,A^2,E^2,G^2,I^2)\), a timed
simulation relation~\cite{STTT13} is a binary relation
\(\operatorname{sim} \subseteq \Omega^{1} \times \Omega^{2}\) where
\(\Omega^{1}\) and \(\Omega^{2}\) are sets of states of
\(\mathscr{A}_1\) and \(\mathscr{A}_2\). We say \(\mathscr{A}_2\) time
simulates
\(\mathscr{A}_1\left(\mathscr{A}_1 \preceq_{t} \mathscr{A}_2\right)\) if
the following conditions holds:
\begin{enumerate}
\item
  initial states correspondence:
  \(\left(\left\langle l_{0}^{1}, \mathbf{0}\right),\left\langle l_{0}^{2}, \mathbf{0}\right\rangle\right) \in \operatorname{sim}\)
\item
  timed transition: For every
  \(\left(\left\langle l_{1}, v_{1}\right\rangle,\left\langle l_{2}, v_{2}\right\rangle\right) \in \operatorname{sim}\)\\
  if
  \(\left\langle l_{1}, v_{1}\right\rangle \stackrel{d}{\rightarrow}\left\langle l_{1}, v_{1}+d\right\rangle,\)
  there exists \(\left\langle l_{2}, v_{2}+d\right\rangle\)\\
  such that
  \(\left\langle l_{2}, v_{2}\right\rangle \stackrel{d}{\rightarrow}\left\langle l_{2}, v_{2}+d\right\rangle\)
  and
  \(\left(\left\langle l_{1},v_{1}+d\right\rangle,\right.\left.\left\langle l_{2}, v_{2}+d\right\rangle\right) \in \operatorname{sim}\)
\item
  discrete transition: for every
  \(\left(\left\langle l_{1}, v_{1}\right\rangle,\left\langle l_{2}, v_{2}\right\rangle\right) \in \operatorname{sim}\)\\
  1. if $a$ is an observable action,\\
  if
  \(\left\langle l_{1}, v_{1}\right\rangle \stackrel{a}{\rightarrow}\left\langle l_{1}^{\prime}, v_{1}^{\prime}\right\rangle,\)
  there exists
  \(\left\langle l_{2}^{\prime}, v_{2}^{\prime}\right\rangle\) such
  that\\
  \(\left\langle l_{2}, v_{2}\right\rangle \stackrel{a}{\rightarrow}\left\langle l_{2}^{\prime}, v_{2}^{\prime}\right\rangle\)
  and
  \(\left(\left\langle l_{1}^{\prime}, v_{1}^{\prime}\right\rangle,\left\langle l_{2}^{\prime}, v_{2}^{\prime}\right\rangle\right) \in \operatorname{sim}\)\\
  2. if $a$ is not an observable action,\\
  if
  \(\left\langle l_{1}, v_{1}\right\rangle \stackrel{a}{\rightarrow}\left\langle l_{1}^{\prime}, v_{1}^{\prime}\right\rangle,\)
  there exists
  \(\left\langle l_{2}^{\prime}, v_{2}^{\prime}\right\rangle\) such
  that\\
  \(\left\langle l_{2}, v_{2}\right\rangle \stackrel{a|\epsilon}{\rightarrow}\left\langle l_{2}^{\prime}, v_{2}^{\prime}\right\rangle\)
  and
  \(\left(\left\langle l_{1}^{\prime}, v_{1}^{\prime}\right\rangle,\left\langle l_{2}^{\prime}, v_{2}^{\prime}\right\rangle\right) \in \operatorname{sim}\)
  
\end{enumerate}
\textbf{Definition: Transition Enabled Interval}\\
Transition enabled interval is defined on an edge's guard and its output location's invariant, which indicates the time interval that an edge can be enabled. For an edge \(e = (l,a,r,l')\), the enabled interval is \(I(l)\wedge G(e)\).

For example, an edge \(e\) has the guard \(t\ge 3\) and \(l\) has the
invariant \(t\le 6\), the enabled interval is \([3,6]\).

If the enabled interval of an edge \(e\) is changed from \([a,b]\) to
\([a-\epsilon_g,b+\epsilon_i]\), with \(\epsilon_g \in \mathbb{N}\),
\(\epsilon_i \in \mathbb{N}\) and \(\epsilon_g+\epsilon_i > 0\), we say that the
enabled interval of the edge \(e\) is extended.

Note that as we assuming that there is no deadlock, therefore
\(I(l)\wedge G(e) \wedge I(l') = I(l)\wedge G(e)\).\\

Followings are formal proofs of the abstraction rules.
\begin{theorem}
$\mathscr{A}_2=\mathscr{R}_1(\mathscr{A}_1, \Delta_G, \Delta_I) \Rightarrow$ $\mathscr{A}_2$ timed simulates $\mathscr{A}_1$.
\end{theorem}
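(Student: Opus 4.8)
The plan is to exhibit an explicit timed-simulation relation and verify the three defining conditions directly. Since $\mathscr{A}_2 = \mathcal{R}_1(\mathscr{A}_1,\Delta_G,\Delta_I)$ shares with $\mathscr{A}_1$ the same locations $L$, initial location $l_0$, clock set $X$, action set $A$, and edge set $E$ — only the guard bounds $N$ and the invariant bounds $M$ are altered — the natural candidate is the identity relation $\operatorname{sim} = \{(\langle l,v\rangle,\langle l,v\rangle) : \langle l,v\rangle \in \Omega^1\}$, pairing each state of $\mathscr{A}_1$ with the syntactically identical state of $\mathscr{A}_2$.

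The key fact I would establish first is a monotonicity lemma: because $\Delta_G,\Delta_I$ are non-negative, applying $\mathcal{R}_1$ only relaxes constraints, so every enabled interval is extended rather than shrunk. Concretely, for each edge $e_i$ the guard $x \ge N^R_i = N^1_i - \delta^G_i$ is implied by $x \ge N^1_i$, and each location's invariant is likewise weakened; hence any clock valuation satisfying a guard (resp. invariant) of $\mathscr{A}_1$ also satisfies the corresponding guard (resp. invariant) of $\mathscr{A}_2$. In the appendix's enabled-interval language, each $[N^1_i,M^1_i]$ becomes $[N^1_i-\delta^G_i,\,M^1_i+\delta^I_i] \supseteq [N^1_i,M^1_i]$. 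A consequence worth recording is that every state reachable in $\mathscr{A}_1$ is a legal state of $\mathscr{A}_2$, so $\operatorname{sim}$ is well defined.

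With this lemma the three conditions follow in order. For initial-state correspondence, $(\langle l_0,\mathbf{0}\rangle,\langle l_0,\mathbf{0}\rangle) \in \operatorname{sim}$ holds by construction, the zero valuation satisfying the relaxed initial invariant of $\mathscr{A}_2$ whenever it satisfies that of $\mathscr{A}_1$. For a timed (delay) transition $\langle l,v\rangle \stackrel{d}{\rightarrow} \langle l,v+d\rangle$ of $\mathscr{A}_1$, the premise $v+d' \in I^1(l)$ for all $0 \le d' \le d$ implies $v+d' \in I^R(l)$ by the lemma, so the identical delay is enabled in $\mathscr{A}_2$ and lands in the paired state. For a discrete transition $\langle l,v\rangle \stackrel{a}{\rightarrow} \langle l',v'\rangle$ of $\mathscr{A}_1$ along edge $e=(l,a,r,l')$, I take the same edge in $\mathscr{A}_2$: the guard premise $v \in G^1(e)$ gives $v \in G^R(e)$, the reset $v' = [r\mapsto 0]v$ is unchanged, and $v' \in I^1(l')$ gives $v' \in I^R(l')$; since the action and target location coincide, both the observable and the unobservable sub-cases of the discrete-transition condition are met, with the successor pair again in $\operatorname{sim}$.

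The step I expect to require the most care is the timed-transition case together with the direction of the invariant modification. The delay condition quantifies over the whole sub-interval $[0,d]$, so I must be sure the modification genuinely weakens the invariant (extends its upper endpoint) rather than tightening it; this is exactly where non-negativity of $\Delta_I$ and the enabled-interval reading from the appendix are needed, and where a careless sign convention would break the argument. Everything else reduces to routine constraint monotonicity, and the strictness of the language inclusion asserted in Theorem 1 is obtained by the separate construction argument rather than from the simulation relation itself.
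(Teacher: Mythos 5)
Your proposal matches the paper's own argument: both take the identity relation on states (legitimate because $\mathcal{R}_1$ preserves $L$, $l_0$, $X$, $A$, $E$) and verify the three simulation conditions via the monotonicity observation that the weakened guards and invariants of $\mathscr{A}_2$ are implied by the originals, so every move of $\mathscr{A}_1$ is matched by the syntactically identical move of $\mathscr{A}_2$. Your explicit attention to the sign convention is well placed and is the only substantive point of divergence in presentation: the main-text rule writes $M^R_i = M^1_i - \delta^I_i$ while the appendix proof uses $N_1 + \Delta_I[1]$, and only the latter (extending the invariant's upper bound) makes the delay-transition case go through.
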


\begin{proof}
After applying the \(\mathscr{R}_1\) on \(\mathscr{A}_1\), we get \(\mathscr{A}_2=GI(\mathscr{A}_1, \Delta_G, \Delta_I)\) whose transition enabled intervals are extended. Because the only differences of \(\mathscr{A}_2\) from \(\mathscr{A}_1\) is the guards \(G\) and invariants \(I\).

Therefore, the proof idea is that for any timed transition or discrete transition, there exist transitions that have the same location \(l\) and clock assignment \(v\) and differ from the transition enabled interval.

Let \(\operatorname{sim} \subseteq \Omega^{1} \times \Omega^{2}\) where \(\Omega^{1}\) and \(\Omega^{2}\) are sets of states of \(\mathscr{A}_1\) and \(\mathscr{A}_2\). It can be seen that \(\mathscr{A}_2\) time simulates \(\mathscr{A}_1\left(\mathscr{A}_1 \preceq_{t} \mathscr{A}_2\right)\) because the following conditions holds:

\begin{enumerate}
\item
  initial states correspondence: \(\left(\left\langle l_{0}, \mathbf{0}\right),\left\langle l_{0}, \mathbf{0}\right\rangle\right) \in \operatorname{sim}\)
\item
  timed transition: \(\forall \left(\left\langle l_{1}, v_{1}\right\rangle,\left\langle l_{2}, v_{2}\right\rangle\right) \in \operatorname{sim}, where \left\langle l_{1}, v_{1}\right\rangle \in \mathscr{A}_1\) and \(\left\langle l_{2}, v_{2}\right\rangle \in \mathscr{A}_2\),

  we want to prove \(\left\langle l_{1}, v_{1}\right\rangle \stackrel{d}{\rightarrow}\left\langle l_{1}, v_{1}+d\right\rangle \Rightarrow \left\langle l_{2}, v_{2}\right\rangle \stackrel{d}{\rightarrow}\left\langle l_{2}, v_{2}+d\right\rangle\).\\
  If \(\left\langle l_{1}, v_{1}\right\rangle \stackrel{d}{\rightarrow}\left\langle l_{1}, v_{1}+d\right\rangle\), then we know that \(v_1\vDash inv_1(l_1,X_1,N_1)\).

  \begin{itemize}
  \item
    Because \(\mathscr{A}_2=GI(\mathscr{A}_1, \Delta_G, \Delta_I)\),
    then there exists \(\langle l_{2}, v_{2}\rangle=\langle l_{1}, v_{1}\rangle\) such
    that \(inv_2(l_2,X_2,N_2)=inv_1(l_1,X_1,N_1+\Delta_{I}[1])\).
  \item
    Because \(v_1\vDash inv_1(l_1,X_1,N_1)\), \(v_2=v_1\),
    then \(v_2\vDash inv_1(l_1,X_1,N_1+\Delta_{I}[1])=inv_2(l_2,X_2,N_2)\).
  \item
    Then we have \(\left\langle l_{2}, v_{2}\right\rangle \stackrel{d}{\rightarrow}\left\langle l_{2}, v_{2}+d\right\rangle\).
    Therefore \(\left(\left\langle l_{1},v_{1}+d\right\rangle,\right.\left.\left\langle l_{2}, v_{2}+d\right\rangle\right) \in \operatorname{sim}\).
  \end{itemize}
\item
  discrete transition: \(\forall \left(\left\langle l_{1}, v_{1}\right\rangle,\left\langle l_{2}, v_{2}\right\rangle\right) \in \operatorname{sim}, where \left\langle l_{1}, v_{1}\right\rangle \in \mathscr{A}_1\) and \(\left\langle l_{2}, v_{2}\right\rangle \in \mathscr{A}_2\).

  We want to prove \(\langle l_{1}, v_{1}\rangle \stackrel{a}{\rightarrow}\langle l_{1}^{\prime}, v_{1}^{\prime}\rangle \Rightarrow \exists \langle l_{2}^{\prime}, v_{2}^{\prime}\rangle \in \mathscr{A}_2, \langle l_{2}, v_{2}\rangle \stackrel{a}{\rightarrow}\langle l_{2}^{\prime}, v_{2}^{\prime}\rangle\).

  If \(\left\langle l_{1}, v_{1}\right\rangle \stackrel{a}{\rightarrow}\left\langle l_{1}^{\prime}, v_{1}^{\prime}\right\rangle\), we know that \(v_1\vDash g_1(e_1,X_1,N_1)\), where \(e_1=(l_1,l'_1)\).

  \begin{itemize}
  \item
    Because \(\mathscr{A}_2=GI(\mathscr{A}_1, \Delta_G, \Delta_I)\), then there exists \(\langle l'_{2}, v'_{2}\rangle=\langle l'_{1}, v'_{1}\rangle\) such that \(g_2(e_2,X_2,N_2)=g_1(e_1,X_1,N_1-\Delta_G[1])\).
  \item
    Because \(v_1\vDash g_1(e_1,X_1,N_1)\), \(v_2=v_1\), then \(v_2\vDash g_1(e_1,X_1,N_1-\Delta_G[1])=g_2(e_2,X_2,N_2)\), where \(e_2=(l_2,l'_2)\).
  \item
    Then we have \(\left\langle l_{2}, v_{2}\right\rangle \stackrel{a}{\rightarrow}\left\langle l_{2}^{\prime}, v_{2}^{\prime}\right\rangle\).
  \end{itemize}
\end{enumerate}

Therefore, \(\left(\left\langle l_{1}^{\prime}, v_{1}^{\prime}\right\rangle,\left\langle l_{2}^{\prime}, v_{2}^{\prime}\right\rangle\right) \in \operatorname{sim}\).
\end{proof}

% \begin{theorem}
% $\mathscr{A}_2=\mathscr{R}_1(\mathscr{A}_1, \Delta_G, \Delta_I) \Rightarrow  \mathscr{L}(\mathscr{A}_1) \subset \mathscr{L}(\mathscr{A}_2)$
% \end{theorem}

% \begin{proof}
% (to be done) In other words, we need to prove\\

% 1. $\mathscr{A}_2$ timed simulates $\mathscr{A}_1$ $\Rightarrow$ $\mathscr{L}(\mathscr{A}_2) \ge \mathscr{L}(\mathscr{A}_1)$.\\

% 2. $\mathscr{A}_2=\mathscr{R}_1(\mathscr{A}_1, \Delta_G, \Delta_I) \Rightarrow  \exists s$, $s\in \mathscr{L}(\mathscr{A}_2)$ and $s\notin \mathscr{L}(\mathscr{A}_1)$.
% \end{proof}

\begin{theorem}
$\mathscr{A}_2$ timed simulates $\mathscr{A}_1$ $\Rightarrow$ $\mathscr{L}(\mathscr{A}_1) \subseteq \mathscr{L}(\mathscr{A}_2)$.
\end{theorem}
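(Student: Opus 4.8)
The plan is to exhibit, for every timed word of $\mathscr{A}_1$, a run of $\mathscr{A}_2$ producing the same timed word, by transporting an arbitrary run of $\mathscr{A}_1$ across the simulation relation $\operatorname{sim}$ one step at a time. Fix a timed word $w=(\sigma,\tau)\in\mathscr{L}(\mathscr{A}_1)$ together with a run of $\mathscr{A}_1$ witnessing it, i.e. an alternating sequence of timed transitions $\stackrel{d}{\rightarrow}$ and discrete transitions $\stackrel{a}{\rightarrow}$ starting from $\langle l_0^1,\mathbf{0}\rangle$ whose projection onto observable actions, together with the accumulated delays, equals $w$. I would prove by induction on the prefix length that each finite prefix of this run can be matched by a run of $\mathscr{A}_2$ ending in a state $\operatorname{sim}$-related to the corresponding $\mathscr{A}_1$-state and emitting the same observable timed prefix.

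Base case: the two initial states are related by condition~1 of the timed-simulation definition, and both emit the empty observable prefix at time $0$. Inductive step: assume the first $k$ transitions have been matched with current states $(\langle l_1,v_1\rangle,\langle l_2,v_2\rangle)\in\operatorname{sim}$. For the $(k{+}1)$-th transition of $\mathscr{A}_1$ I distinguish three cases. If it is a timed transition $\stackrel{d}{\rightarrow}$, condition~2 supplies a delay transition of $\mathscr{A}_2$ with the \emph{same} $d$ landing in a related state, so global time advances identically on both sides and no observable symbol is appended. If it is a discrete transition on an observable $a$, condition~3.1 supplies a matching $\stackrel{a}{\rightarrow}$ of $\mathscr{A}_2$ into a related state; since the timestamp is fixed by the preserved accumulated delay, both runs append the same pair $(a,\tau_i)$. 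If it is a discrete transition on an unobservable $a$, condition~3.2 supplies a $\stackrel{a|\epsilon}{\rightarrow}$ step into a related state; in either sub-case nothing is appended to either observable timed word, and the timestamps stay aligned because neither a silent step nor an unobservable action advances time. In all three cases the invariant---related current states and identical observable timed prefix---is restored.

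Passing to the limit, the matched prefixes are consistent and extend one another, so they determine a single infinite run of $\mathscr{A}_2$. Because every delay of $\mathscr{A}_1$ is copied verbatim, the timestamp sequence produced by $\mathscr{A}_2$ coincides with $\tau$, which in particular inherits the monotonicity and progress (non-Zeno) conditions required of a timed sequence; hence the $\mathscr{A}_2$-run is legitimate and produces exactly $w$, giving $w\in\mathscr{L}(\mathscr{A}_2)$. As $w$ was arbitrary, $\mathscr{L}(\mathscr{A}_1)\subseteq\mathscr{L}(\mathscr{A}_2)$.

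The main obstacle, I expect, is not the step-wise matching---which is immediate from the simulation clauses---but the bookkeeping around unobservable actions together with the limit argument: one must verify that replacing an unobservable $a$ by a silent $\epsilon$ (or vice versa) never desynchronizes the two timestamp sequences, and that the infinitely many matched prefixes genuinely glue into one well-defined, progress-satisfying run of $\mathscr{A}_2$ rather than merely matching each finite approximation. Making precise the notion of ``the timed word produced by a run,'' so that its projection onto $A_o$ behaves correctly across the $\epsilon$-steps, is where the argument will need the most care.
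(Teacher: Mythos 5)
Your proposal is correct and follows essentially the same route as the paper's own proof: transport the witnessing run of $\mathscr{A}_1$ across the simulation relation step by step, matching each delay $d$ verbatim and each observable action via the simulation clauses, so that the accumulated global time and the observable projection coincide. You are in fact somewhat more careful than the paper, which glosses over the unobservable-action case (condition 3.2) and the gluing of matched finite prefixes into a single infinite run, both of which you treat explicitly.
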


\begin{proof}
The language of a model is defined on the observable sending actions. 
Let $w$ be any timed word of $\mathscr{A}_1$ then there must exist a simulation procedure \\
$p_1= \left\langle l^1_0,\mathbf{0} \right\rangle  \left\langle l^1_1,v^1_1 \right\rangle  \left\langle l^1_2,v^1_2 \right\rangle ... \left\langle l^1_n,v^1_n \right\rangle $ of $\mathscr{A}_1$ that produces $w$.

Let $\Omega^1$ be the sets of states of $\mathscr{A}_1$ and $\Omega^2$ be the sets of states of $\mathscr{A}_2$. Let $sim\subseteq \Omega^1 \times \Omega^2$ be the timed simulation relation. Because $\mathscr{A}_2$ timed simulates $\mathscr{A}_1$, then we have\\
1. $\exists  \left\langle l^2_0,\mathbf{0} \right\rangle \in \Omega_2$ such that  $( \left\langle l^1_0,\mathbf{0} \right\rangle , \left\langle l^2_0,\mathbf{0} \right\rangle )\in sim$. We can set a global time $t^1_g=0$ for $\mathscr{A}_1$ and $t^2_g=0$ for $\mathscr{A}_2$ at the initial state.\\
2. $\forall \left\langle l_i,v_i \right\rangle , \left\langle l_{i+1},v_{i+1} \right\rangle \in \Omega^1$, 

(Timed Transition) If the two neighbor states have a time increase by $d$, i.e. $l^1_{i+1}=l^1_{i}$, $v^1_{i+1}=v^1_{i}+d$ and $\left\langle l^1_{i}, v^1_{i}\right\rangle \stackrel{d}{\rightarrow}\left\langle l^1_{i+1}, v^1_{i+1}\right\rangle$, then $t^1_g$ increase $d$ between the two states.\\
From the timed simulation relation we know that $\exists \left\langle l^2_j,v^2_j \right\rangle, \left\langle l^2_{j+1},v^2_{j+1} \right\rangle \in \Omega_2$ such that $l^2_{j+1}=l^2_{j}$, $v^2_{j+1}=v^2_{j}+d$, $\left\langle l^2_{j}, v^2_{j}\right\rangle \stackrel{d}{\rightarrow}\left\langle l^2_{j+1}, v^2_{j+1}\right\rangle$, which means $t^2_g$ can increase the same time $d$.

(Discrete Transition with Observable Events) If the two neighbor states have a transition that does not have an observable action, in other words, $\left\langle l^1_{i}, v^1_{i}\right\rangle \stackrel{a}{\rightarrow}\left\langle l^1_{i+1}, v^1_{i+1}\right\rangle$, where $a$ is an observable action.\\
From the timed simulation relation we know that $\exists \left\langle l^2_j,v^2_j \right\rangle, \left\langle l^2_{j+1},v^2_{j+1} \right\rangle \in \Omega_2$ such that $\left\langle l^2_{j}, v^2_{j}\right\rangle \stackrel{a}{\rightarrow}\left\langle l^2_{j+1}, v^2_{j+1}\right\rangle$, which means $\mathscr{A}_2$ can send the same observable signal $a$ as $\mathscr{A}_1$.

% (Discrete Transition without Observable Events) If the two neighbor states have a transition that does not have an observable action, in other words, $\left\langle l^1_{i}, v^1_{i}\right\rangle \stackrel{a}{\rightarrow}\left\langle l^1_{i+1}, v^1_{i+1}\right\rangle$, where $a$ is an unobservable action.\\
% From the timed simulation relation we know that $\exists \left\langle l^2_j,v^2_j \right\rangle, \left\langle l^2_{j+1},v^2_{j+1} \right\rangle \in \Omega_2$ such that $\left\langle l^2_{j}, v^2_{j}\right\rangle \stackrel{a|\epsilon}{\rightarrow}\left\langle l^2_{j+1}, v^2_{j+1}\right\rangle$, which means $\mathscr{A}_2$ can follow and simulates this transition process as $\mathscr{A}_1$.

Until here we know
1. $\mathscr{A}_1$ and $\mathscr{A}_2$ have the same initial global time as $zero$.

2. For any timed transition of $\mathscr{A}_1$ that increases $t^1_g$ by $d$, there is the same procedure in $\mathscr{A}_2$ such that $t^2_g$ increases the same time $d$.

3. For any transition with observable event in $\mathscr{A}_1$, there is the same procedure that produces the same observable event in $\mathscr{A}_2$.

For any timed word $w$ of $\mathscr{A}_1$ that send some observable action at some time, $\mathscr{A}_2$ can simulate the same timed word, i.e. send the same observable action at the same time. 

Therefore, $\mathscr{L}(\mathscr{A}_1) \subseteq \mathscr{L}(\mathscr{A}_2)$.
\end{proof}

% \begin{theorem}
%  $\mathscr{A}_2=\mathscr{R}_1(\mathscr{A}_1, \Delta_G, \Delta_I) \Rightarrow  \exists s$, $s\in \mathscr{L}(\mathscr{A}_2)$ and $s\notin \mathscr{L}(\mathscr{A}_1)$.
% \end{theorem}

% \begin{proof}
% (to be done) Intuitively we can always enable the upper bound of each $\mathscr{A}_2$'s enabled interval to get a timed word that is not in $\mathscr{L}(\mathscr{A}_1)$. 

% Formal proof: maybe diagonal argument. 
% \end{proof}

\end{document}